\documentclass[letterpaper,10pt,conference]{ieeeconf}
\IEEEoverridecommandlockouts
\usepackage{amsmath,amssymb,mathtools}
\usepackage{bm}
\usepackage{cite}

\newcommand{\dd}{\,\mathrm{d}}

\usepackage{amsmath,amssymb,mathtools}
\usepackage{bm}
\usepackage{graphicx}
\usepackage{cite}
\usepackage{epstopdf}
\usepackage[normalem]{ulem}
\usepackage[hidelinks]{hyperref}
\usepackage{psfrag,color}
\usepackage{algorithm}
\usepackage{algpseudocode}
\usepackage{siunitx}

\newtheorem{theorem}{Theorem}

\newtheorem{remark}{Remark}
\usepackage{lmodern} % Allows infinitely scalable fonts

\title{\LARGE \bf  Enhancing Energy Harvesting in Harmonically Forced Oscillators via Regenerative Impulsive Braking }

\author{Nilay~Kant%
\thanks{The author is with the Department of Mechanical and Aerospace Engineering, Missouri University of Science and Technology, Rolla, MO, USA.}%
\thanks{Corresponding author: Nilay Kant (email: nilaykant@mst.edu)}%
}

\begin{document}
\maketitle
\begin{abstract}
This paper investigates whether regenerative impulsive braking can enhance passive energy harvesting in harmonically forced oscillators. An impulsive braking event directly extracts kinetic energy while simultaneously inducing a transient that alters subsequent passive harvesting. The analysis shows that suitably timed braking can produce a net energy gain over passive harvesting alone. The braking-induced transient is derived analytically and used to obtain an exact expression for the harvested-energy gain over an arbitrary finite post-braking horizon, together with conditions for positive gain and the optimal braking instant. The corresponding infinite-horizon analysis shows that optimal braking occurs at zero crossings of the harmonic excitation, with complete braking maximizing the gain. Numerical results further show that the finite-horizon harvested-energy gain can exceed its infinite-horizon value. Simulations using short-duration high-gain braking closely match analysis, supporting potential of the proposed approach for energy-harvesting applications.
\end{abstract}

\section{Introduction}

Vibration energy harvesting converts environmental or externally induced motion into electrical energy across a wide range of scales \cite{stephen2006energy,cai2022nexus}. Applications include ocean waves \cite{falnes2007review,viet2016energy}, railways and vehicles \cite{xie2015mathematical,zhang2016high,zhang2016portable}, civil infrastructure \cite{jung2011energy,scruggs2010multi}, human motion \cite{starner2004human}, and low-power devices such as wireless sensors, MEMS, and piezoelectric systems \cite{paradiso2005energy,tao2016novel}. A common objective is therefore to enhance harvested-energy.

Vibration energy harvesting methods can be broadly classified as passive, active, or semi-active. Passive harvesters rely on fixed mechanical or electrical dissipation, whereas active methods can reshape the response more aggressively but require powered actuation \cite{yan2019low,scapolan2016energy}. Semi-active methods provide an intermediate alternative by modulating quantities such as damping or stiffness using comparatively low-power actuation. Examples include periodic time-varying damping \cite{di2013energy}, electromagnetic damping control \cite{caruso2018semi}, latching control in wave energy converters \cite{BABARIT200677}, and switching between discrete damping levels \cite{potter2010generalisation}. These approaches show that appropriately timed changes in the system dynamics can improve energy extraction, although optimal policies are often obtained numerically.

This paper investigates regenerative impulsive braking as a semi-active harvesting mechanism for a harmonically forced oscillator. A braking event directly recovers kinetic energy through an abrupt reduction in velocity while simultaneously inducing a transient that alters subsequent passive harvesting. The net benefit therefore depends jointly on the braking instant, braking strength, directly recovered energy, and the resulting transient. A related approach, termed impulsive parametric damping, was studied in \cite{tehrani2016impulsive}, where prescribed impulsive damping sequences were shown to enhance energy extraction. However, a closed-form characterization of the incremental harvested-energy gain relative to passive harvesting and its optimal braking conditions was not provided. This paper derives the braking-induced transient and obtains exact finite- and infinite-horizon harvested-energy gains, together with conditions for net energy gain and optimal braking. More broadly, impulsive inputs have been used for enlarging regions of attraction \cite{kant2019estimation}, stabilization and orbital regulation \cite{kant2018impulsive,kant2020orbital}, and impulsive braking for energy-based orbital stabilization \cite{kant2021stabilization}, including experimental implementations using standard actuators and sensors \cite{kant2024EHGO,kant2019estimation}. The present work extends this control modality to energy harvesting by using regenerative impulsive braking to extract energy while reshaping the subsequent forced response. For the infinite-horizon case, the analysis shows that optimal braking occurs at zero crossings of the external force, with complete braking maximizing the gain whenever positive gain is achievable. Numerical results further reveal that the finite-horizon gain can vary nonmonotonically with the evaluation horizon and exceed its infinite-horizon value. Finally, dynamic simulations demonstrate that the ideal impulsive braking action can be closely approximated using short-duration high-gain braking.

The remainder of this paper is organized as follows. Section~II formulates the problem. Section~III derives the braking-induced transient response. Section IV develops the finite- and infinite-horizon harvested-energy analysis and optimal braking conditions. Section~V provides numerical and dynamic simulation results, followed by concluding remarks in Section~VI.

\section{Problem Formulation} \label{sec:problem}

Consider the harmonically forced underdamped linear oscillator shown in Fig. \ref{fig1}, whose dynamics is given by:
\begin{equation}
m\ddot{x}+c\dot{x}+kx=F_0\sin(\omega t)
\label{eq:plant}
\end{equation}

\noindent where $m>0$ is the mass, $c>0$ is the damping coefficient, $k>0$ is the spring constant, and $F_0>0$ and $\omega>0$ are the amplitude and frequency of the external harmonic excitation respectively. The damper acts as a passive energy harvesting mechanism that converts the motion of the mass into harvested-energy.  The steady-state displacement and velocity of the mass are \cite{thomson2008theory}:
\begin{subequations}
\label{eq:steady_state}
\begin{align}
x_p(t)&=X\sin(\omega t-\phi) \label{eq:steady_x}\\
\dot{x}_p(t)&=X\omega\cos(\omega t-\phi)
\label{eq:steady_v}
\end{align}
\end{subequations}

\noindent where
\begin{subequations}
\label{eq:steady_parameters}
\begin{align}
X(\omega)&=\frac{F_0}{\sqrt{(k-m\omega^2)^2+(c\omega)^2}}
\label{eq:steady_amplitude}\\
\phi(\omega)&=\operatorname{atan2}\!\left(c\omega, k-m\omega^2\right)
\label{eq:steady_phase}
\end{align}
\end{subequations}

\begin{figure}[t!]
 \centering
\psfrag{A}[c][c][1][0]{{$m$}}
\psfrag{B}[c][c][1][0]{{$k$}}
\psfrag{C}[c][c][1][0]{ \scriptsize$F_0 \sin (\omega t)$}
\psfrag{D}[c][c][1][0]{ $c$}
\psfrag{E}[c][c][1][0]{\scriptsize Energy Harvester}
\psfrag{F}[c][c][1][0]{\scriptsize $x(t)$}
 \includegraphics[width=0.85\linewidth]{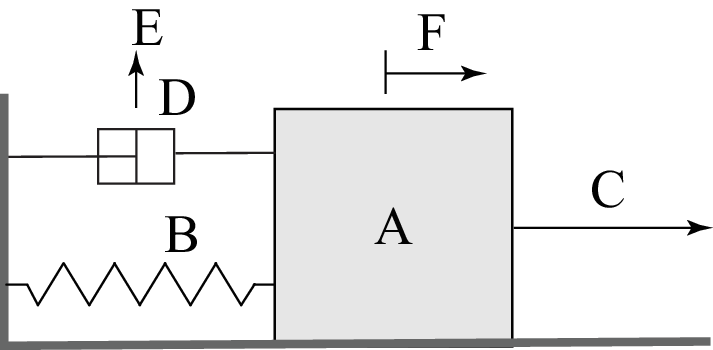}
 \caption{A harmonically forced linear oscillator.}
 \label{fig1}
 \end{figure}

We augment the continuous passive energy harvesting with an ideal regenerative impulsive brake\footnote{The ideal impulsive brake can be approximated by applying a short-duration high-gain velocity-feedback braking input, as demonstrated in Section \ref{sec5b}.} applied at a desired time $t=t_0$. Let $t_0^-$ and $t_0^+$ denote the instants immediately before and after the braking event. The impulsive braking law induces an instantaneous reduction in the velocity with no change in position, given as:
\begin{equation}
x(t_0^+)=x(t_0^-), \qquad \dot{x}(t_0^+)=\alpha\dot{x}(t_0^-), \quad 0\leq\alpha\leq1
\label{eq:reset}
\end{equation}

\noindent where $\alpha$ characterizes the fraction of pre-braking velocity retained immediately after braking. Denote the pre-braking steady-state velocity by
\begin{equation}
v_0\triangleq\dot{x}_p(t_0)=X\omega\cos(\omega t_0-\phi)
\label{eq:v0}
\end{equation}

\noindent The energy harvested through ideal regenerative impulsive braking equals the reduction in kinetic energy of the mass, given as follows
\begin{equation}
W_b=\frac{1}{2}m\left[\dot{x}^2(t_0^-)-\dot{x}^2(t_0^+)\right]=\frac{1}{2}m(1-\alpha^2)v_0^2
\label{eq:brake_energy}
\end{equation}

\noindent \textbf{Problem Statement:} For a fixed post-braking time horizon $T>0$, we compare the total energy extracted over $[t_0,t_0+T]$ under regenerative impulsive braking with that obtained from passive damping alone. The corresponding extracted energies are defined respectively as:
\begin{equation}
W_{\mathrm{imp}}(T)\triangleq W_b+\int_{t_0}^{t_0+T}c\dot{x}_{\mathrm{imp}}^2(t)\dd t
\label{eq:Wimpulsive}
\end{equation}
and
\begin{equation}
W_{\mathrm{pas}}(T)\triangleq\int_{t_0}^{t_0+T}c\dot{x}_p^2(t)\dd t
\label{eq:Wpassive}
\end{equation}

\noindent Furthermore, the net gain in harvested-energy due to regenerative impulsive braking is
\begin{equation}
\Delta W(T)\triangleq W_{\mathrm{imp}}(T)-W_{\mathrm{pas}}(T)
\label{eq:deltaW_def}
\end{equation}

\noindent The objective of this paper is to derive $\Delta W(T)$ analytically, obtain the conditions under which $\Delta W(T)>0$, and determine the optimal braking time $t_0$ that maximizes $\Delta W(T)$.
% ================================================================
\section{Braking-Induced Transient Response}

\label{sec:homogeneous}

\noindent Since the oscillator in \eqref{eq:plant} is in steady state before the braking event, its total response is given by \eqref{eq:steady_x}. The harmonic excitation remains unchanged after braking; therefore, braking introduces only an additional  transient response. To compute this, define the shifted time
\begin{equation}
\tau\triangleq t-t_0
\end{equation}
and write the post-braking displacement as
\begin{equation}
x_{\mathrm{imp}}(t_0+\tau)=x_p(t_0+\tau)+z(\tau)
\label{eq:trajectory_decomposition}
\end{equation}

\noindent where $z(\tau)$ denotes the braking-induced transient response.

After the braking, both $x_{\mathrm{imp}}$ and $x_p$ are governed by \eqref{eq:plant}. Substituting \eqref{eq:trajectory_decomposition} into \eqref{eq:plant} and subtracting the corresponding equation satisfied by $x_p(t)$ cancels the common forcing term $F_0\sin(\omega t)$, yielding the homogeneous system
\begin{equation}
m\ddot{z}+c\dot{z}+kz=0
\label{eq:z_dynamics}
\end{equation}

\noindent At $\tau=0$, in accordance with \eqref{eq:reset}, position remains unchanged but the velocity changes from $v_0$ in \eqref{eq:v0} to $\alpha v_0$. Hence, the initial conditions for \eqref{eq:z_dynamics} are given by:
\begin{equation}
z(0)=0, \qquad \dot{z}(0)=-(1-\alpha)v_0
\label{eq:z_initial}
\end{equation}

\noindent Define
\begin{equation}
a \triangleq c/(2m), \quad \omega_n \triangleq \sqrt{k/m}, \quad \omega_d\triangleq\sqrt{\omega_n^2-a^2}
\label{eq:dynamic_parameters}
\end{equation}

\noindent where $\omega_n$ and $\omega_d$ are the undamped and damped natural frequencies, and $a$ is the decay rate. Using \eqref{eq:z_initial}, the brake-induced transient which is the solution of \eqref{eq:z_dynamics} can be obtained as follows:
\begin{subequations}
\label{eq:z_response}
\begin{align}
z(\tau)&=-\frac{(1-\alpha)v_0}{\omega_d}e^{-a\tau}\sin(\omega_d\tau)
\label{eq:z_solution}\\
\dot{z}(\tau)&=-(1-\alpha)v_0h(\tau)
\label{eq:zdot_solution}\\
h(\tau)&\triangleq e^{-a\tau}\left[\cos(\omega_d\tau)-\frac{a}{\omega_d}\sin(\omega_d\tau)\right]
\label{eq:h_def}
\end{align}
\end{subequations}

\noindent Finally, using \eqref{eq:steady_v}, \eqref{eq:trajectory_decomposition} and \eqref{eq:z_response}, we get:
\begin{equation}
\dot{x}_{\mathrm{imp}}(t_0+\tau)=X\omega\cos\!\left(\omega t_0+\omega\tau-\phi\right)-(1-\alpha)v_0h(\tau)
\label{eq:post_brake_velocity}
\end{equation}

\noindent The post-braking velocity in \eqref{eq:post_brake_velocity} will now be used to quantify the change in harvested-energy produced due to impulsive braking - this is developed next.

\section{Optimal Harvested-Energy Gain}
\label{sec:energy_gain}

\subsection{Finite-Horizon Harvested-Energy Gain}\label{sec:finite_horizon}
Differentiating \eqref{eq:trajectory_decomposition} gives $\dot{x}_{\rm imp}=\dot{x}_p+\dot z$. Substituting this relation into (7) and subtracting (8),   the common passive contribution cancels, yielding
\begin{equation}
\Delta W(T)=W_b+2c\int_0^T\dot{x}_p(t_0+\tau)\dot{z}(\tau)\dd\tau+c\int_0^T\dot{z}^2(\tau)\dd\tau
\label{eq:deltaW_exact}
\end{equation}

\noindent For any finite $T>0$, the value of $\Delta W(T)$ in \eqref{eq:deltaW_exact} is exact and does not require the braking-induced transient to be negligible. We aim to characterize the dependence of $\Delta W(T)$ on the braking instant $t_0$.

Let
\begin{equation}
r\triangleq1-\alpha, \qquad \psi\triangleq\omega t_0-\phi
\label{eq:r_psi}
\end{equation}

\noindent such that from \eqref{eq:v0}, $v_0=X\omega\cos\psi$. Using \eqref{eq:steady_v} and \eqref{eq:zdot_solution}, we get
\[
\dot{x}_p(t_0+\tau)=X\omega\cos(\psi+\omega\tau), \quad \dot{z}(\tau)=-rv_0h(\tau)
\]

\noindent Furthermore, define the time horizon-dependent quantities:
\begin{align}\label{eq:ABC_def}
A_T&\triangleq\int_0^T h(\tau)\cos(\omega\tau)\dd\tau, \quad
B_T\triangleq\int_0^T h(\tau)\sin(\omega\tau)\dd\tau\\
C_T&\triangleq\int_0^T h^2(\tau)\dd\tau, \notag
\end{align}

\noindent and let $\eta\triangleq a/\omega_d$, $\nu_{\pm}\triangleq\omega\pm\omega_d$, and the auxiliary integrals
\begin{equation}
\begin{aligned}
\mathcal{C}_{\lambda}(\nu,T)
&\triangleq\int_0^T e^{-\lambda\tau}\cos(\nu\tau)\dd\tau\\
&=\frac{\lambda\left[1-e^{-\lambda T}\cos(\nu T)\right]+\nu e^{-\lambda T}\sin(\nu T)}{\lambda^2+\nu^2}\\
\mathcal{S}_{\lambda}(\nu,T)
&\triangleq\int_0^T e^{-\lambda\tau}\sin(\nu\tau)\dd\tau\\
&=\frac{\nu\left[1-e^{-\lambda T}\cos(\nu T)\right]-\lambda e^{-\lambda T}\sin(\nu T)}{\lambda^2+\nu^2}
\end{aligned}
\label{eq:aux_integrals}
\end{equation}

\noindent Using \eqref{eq:h_def} and the standard product to sum trigonometric identities, \eqref{eq:ABC_def} can be expressed as follows:
\begin{equation}
\begin{aligned}
A_T={}&\frac{1}{2}\left[\mathcal{C}_a(\nu_-,T)+\mathcal{C}_a(\nu_+,T)\right]\\
&-\frac{\eta}{2}\left[\mathcal{S}_a(\nu_+,T)-\mathcal{S}_a(\nu_-,T)\right]\\
B_T={}&\frac{1}{2}\left[\mathcal{S}_a(\nu_-,T)+\mathcal{S}_a(\nu_+,T)\right]\\
&-\frac{\eta}{2}\left[\mathcal{C}_a(\nu_-,T)-\mathcal{C}_a(\nu_+,T)\right]\\
C_T={}&\frac{1+\eta^2}{2}\mathcal{C}_{2a}(0,T)
+\frac{1-\eta^2}{2}\mathcal{C}_{2a}(2\omega_d,T)\\
&-\eta\mathcal{S}_{2a}(2\omega_d,T)
\end{aligned}
\label{eq:ABC_closed}
\end{equation}

\noindent As expected, the energy gain for time horizon $T$ depends on the decay rate $a$, the excitation frequency $\omega$, and the damped natural frequency $\omega_d$.

\begin{theorem}
\label{thm:finite_gain}
For the oscillator \eqref{eq:plant} subject to the regenerative braking law \eqref{eq:reset}, the net gain in harvested-energy over a fixed post-braking horizon $T>0$ is given by
\begin{equation}
\Delta W(T)=X^2\omega^2\left[P_T\cos^2\psi+Q_T\sin\psi\cos\psi\right]
\label{eq:finite_gain}
\end{equation}
where
\begin{equation}
\begin{aligned}
P_T&\triangleq\frac{1}{2}m(1-\alpha^2)-2crA_T+cr^2C_T\\
Q_T&\triangleq2crB_T
\end{aligned}
\label{eq:PQ}
\end{equation}
Consequently, impulsive braking provides a net gain in harvested-energy whenever
\begin{equation}
P_T\cos^2\psi+Q_T\sin\psi\cos\psi>0
\label{eq:finite_benefit_condition}
\end{equation}

\noindent Furthermore, for $R_T\triangleq\sqrt{P_T^2+Q_T^2}>0$, the energy gain is maximized when $t_0 = t_0^\star$ satisfies the following relation
\begin{equation}
\omega t_0^\star=\phi+\frac{1}{2}\operatorname{atan2}(Q_T,P_T)+n\pi, \quad n\in\mathbb{Z}
\label{eq:t0_opt}
\end{equation}

\noindent with the corresponding maximum harvested-energy gain:
\begin{equation}
\Delta W_{\max}(T)=\frac{X^2\omega^2}{2}\left[P_T+\sqrt{P_T^2+Q_T^2}\right]
\label{eq:finite_max_gain}
\end{equation}
\end{theorem}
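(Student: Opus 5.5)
The proof will start from the exact decomposition \eqref{eq:deltaW_exact} and substitute the three ingredients already available. First, $W_b=\frac{1}{2}m(1-\alpha^2)v_0^2=\frac{1}{2}m(1-\alpha^2)X^2\omega^2\cos^2\psi$ by \eqref{eq:brake_energy} together with $v_0=X\omega\cos\psi$. Second, $\dot z(\tau)=-rv_0h(\tau)$ gives
\[
c\int_0^T\dot z^2\dd\tau=cr^2v_0^2C_T=cr^2C_TX^2\omega^2\cos^2\psi .
\]
Third, for the cross term I will expand $\cos(\psi+\omega\tau)=\cos\psi\cos(\omega\tau)-\sin\psi\sin(\omega\tau)$, so that
\[
2c\int_0^T\dot x_p(t_0+\tau)\dot z(\tau)\dd\tau=-2crX\omega v_0\left[A_T\cos\psi-B_T\sin\psi\right]
=X^2\omega^2\left[-2crA_T\cos^2\psi+2crB_T\sin\psi\cos\psi\right],
\]
using the definitions \eqref{eq:ABC_def}. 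Collecting the $\cos^2\psi$ and $\sin\psi\cos\psi$ coefficients gives exactly $P_T$ and $Q_T$ in \eqref{eq:PQ}, which establishes \eqref{eq:finite_gain}. The positivity condition \eqref{eq:finite_benefit_condition} then follows immediately, since $X^2\omega^2>0$. The closed forms \eqref{eq:ABC_closed} are not needed for the theorem itself; they only make $A_T,B_T,C_T$ computable.

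For the optimization, I will rewrite the bracket with double-angle identities:
\[
P_T\cos^2\psi+Q_T\sin\psi\cos\psi=\tfrac12 P_T+\tfrac12\left[P_T\cos 2\psi+Q_T\sin2\psi\right].
\]
With $R_T>0$, set $\theta=\operatorname{atan2}(Q_T,P_T)$, so that $P_T=R_T\cos\theta$ and $Q_T=R_T\sin\theta$. The bracket then becomes $R_T\cos(2\psi-\theta)$. This is maximized exactly when $2\psi=\theta+2n\pi$, i.e.\ $\psi=\frac12\theta+n\pi$. Since $\psi=\omega t_0-\phi$, this is precisely \eqref{eq:t0_opt}. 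Substituting gives the maximum $\frac{X^2\omega^2}{2}(P_T+R_T)$, which is \eqref{eq:finite_max_gain}.

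No step is a genuine obstacle. The only point needing care is sign bookkeeping in the cross term: the minus sign from $\dot z=-rv_0h$ combines with the minus sign in the cosine addition formula, and this is what makes $Q_T=+2crB_T$ while the $A_T$ term enters $P_T$ with a negative sign. I will also note that $t_0$ enters only through $\psi$, since $P_T$ and $Q_T$ are independent of $t_0$. This is what justifies treating the maximization as a one-variable trigonometric problem in $\psi$.
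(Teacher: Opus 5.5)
Your proposal is correct and follows the paper's proof essentially step for step. It uses the same cosine-addition expansion of the cross term, the same collection of coefficients into $P_T$ and $Q_T$, and the same double-angle and phase-shift rewriting $\frac{X^2\omega^2}{2}\left[P_T+R_T\cos(2\psi-\theta)\right]$ to obtain the optimal phase and the maximum; your explicit remark that $P_T,Q_T$ are independent of $t_0$ is a worthwhile clarification the paper leaves implicit.
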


\begin{proof}
Using $\cos(\psi+\omega\tau)=\cos\psi\cos(\omega\tau)-\sin\psi\sin(\omega\tau)$, the two integral terms in \eqref{eq:deltaW_exact} become
\begin{align*}
2c\int_0^T\dot{x}_p\dot{z}\dd\tau
&=-2crX\omega v_0\left(A_T\cos\psi-B_T\sin\psi\right)\\
c\int_0^T\dot{z}^2\dd\tau
&=cr^2v_0^2C_T
\end{align*}
Substituting these expressions and $W_b$ from \eqref{eq:brake_energy} into \eqref{eq:deltaW_exact}, and using $v_0=X\omega\cos\psi$ from \eqref{eq:v0} and \eqref{eq:r_psi} gives
\[
\begin{aligned}
\Delta W(T)&=\\
X^2\omega^2&\Bigg[
\left(\frac{1}{2}m(1-\alpha^2)-2crA_T+cr^2C_T\right)\cos^2\psi\\
&+2crB_T\sin\psi\cos\psi
\Bigg]
\end{aligned}
\]
which gives \eqref{eq:finite_gain} using \eqref{eq:PQ}. Since $X^2\omega^2>0$, the condition $\Delta W(T)>0$ gives \eqref{eq:finite_benefit_condition}.

Next, using $\cos^2\psi=(1+\cos2\psi)/2$ and $\sin\psi\cos\psi=\sin2\psi/2$, \eqref{eq:finite_gain} can be written as
\[
\Delta W(T)=\frac{X^2\omega^2}{2}\left[P_T+P_T\cos(2\psi)+Q_T\sin(2\psi)\right]
\]
Let $\gamma_T\triangleq\operatorname{atan2}(Q_T,P_T)$. Then
\[
P_T\cos(2\psi)+Q_T\sin(2\psi)=R_T\cos(2\psi-\gamma_T)
\]
and therefore
\[
\Delta W(T)=\frac{X^2\omega^2}{2}\left[P_T+R_T\cos(2\psi-\gamma_T)\right]
\]

\noindent For $R_T>0$, the maximum occurs when $2\psi-\gamma_T=2n\pi$. In the above equation, using $\psi=\omega t_0-\phi$ gives \eqref{eq:t0_opt}, and setting $\cos(2\psi-\gamma_T)=1$ gives \eqref{eq:finite_max_gain}.
\end{proof}

\begin{remark}
If $R_T=0$, then $P_T=Q_T=0$ which implies the net harvested-energy gain $\Delta W(T)=0$. In general, however, the optimal braking instant will depend on the horizon $T$, the braking ratio $\alpha$, and on $P_T$ and $Q_T$ which effectively depend on the oscillator response.
\end{remark}

\subsection{Infinite-Horizon Harvested-Energy Gain}
\noindent The finite-horizon analysis in subsection \ref{sec:finite_horizon} quantifies the energy gain of a single regenerative braking event over a prescribed interval $[t_0,t_0+T]$. It is also useful to quantify the net effect of this braking event after the induced transient has completely decayed and the oscillator has returned to its periodic steady state. This corresponds to the infinite-horizon limit of the finite-horizon result. Accordingly, let the infinite-horizon gain in harvested-energy be
\begin{equation}
\Delta W_\infty\triangleq\lim_{T\rightarrow\infty}\Delta W(T)
\label{eq:deltaW_infty_def}
\end{equation}

\begin{theorem}
\label{thm:asymptotic_gain}

For the oscillator \eqref{eq:plant} subject to the regenerative braking law \eqref{eq:reset}, the infinite-horizon gain in harvested-energy is given by
\begin{equation}
\Delta W_\infty=(1-\alpha)mX^2\omega^2\left[\cos^2(\omega t_0)-\sin^2\phi\right]
\label{eq:asymptotic_gain}
\end{equation}

\noindent For any fixed braking ratio $0\leq\alpha<1$, a net gain $\Delta W_\infty>0$ is obtained if and only if
\begin{equation}
\cos^2(\omega t_0)>\sin^2\phi
\label{eq:asymptotic_benefit_condition}
\end{equation}

\noindent For the same $\alpha$, the maximum infinite-horizon gain is
\begin{equation}
\max_{t_0}\Delta W_\infty=(1-\alpha)mX^2\omega^2\cos^2\phi,
\label{eq:max_gain}
\end{equation}
and is attained at the optimal braking times $t_0^\star$ satisfying
\begin{equation}
\omega t_0^\star=n\pi, \qquad n\in\mathbb{Z}
\label{eq:zero_crossing_time}
\end{equation}
which correspond to zero crossings of the external harmonic force. Furthermore, if $\alpha$ is also optimized over $0\!\leq\!\alpha\!\leq\!1$ and $\cos\phi\neq0$, the optimal braking ratio is
\begin{equation}
\alpha^\star=0
\label{eq:alpha_opt}
\end{equation}
and the overall maximum harvested-energy is
\begin{equation}
\max_{t_0,\alpha}\Delta W_\infty=mX^2\omega^2\cos^2\phi,
\label{eq:full_brake_gain}
\end{equation}
attained at $t_0=t_0^\star$ and $\alpha=\alpha^\star$.
\end{theorem}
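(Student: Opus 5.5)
The plan is to pass to the limit $T\to\infty$ in Theorem~\ref{thm:finite_gain}. Since $a>0$, every term in \eqref{eq:aux_integrals} carrying $e^{-\lambda T}$ vanishes, so $A_T,B_T,C_T$ converge to finite limits $A_\infty,B_\infty,C_\infty$. Then \eqref{eq:finite_gain} gives
\[
\Delta W_\infty=X^2\omega^2\left[P_\infty\cos^2\psi+Q_\infty\sin\psi\cos\psi\right],
\]
with $P_\infty,Q_\infty$ defined as in \eqref{eq:PQ}. I will not simplify the closed forms \eqref{eq:ABC_closed} by brute force. Instead I evaluate the three limits structurally.

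\emph{Step 1 (the main obstacle: getting $A_\infty,B_\infty,C_\infty$ in usable form).} The function $h$ is the velocity of the homogeneous system \eqref{eq:z_dynamics} with $z(0)=0$, $\dot z(0)=1$. Its Laplace transform is therefore $H(s)=s/(s^2+2as+\omega_n^2)$. This gives
\[
A_\infty-iB_\infty=\int_0^\infty h(\tau)e^{-i\omega\tau}\dd\tau=H(i\omega)=\frac{i\omega m}{D},\qquad D\triangleq k-m\omega^2+ic\omega .
\]
Rationalizing, $A_\infty=mc\omega^2/|D|^2$ and $B_\infty=-m\omega(k-m\omega^2)/|D|^2$. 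From \eqref{eq:steady_parameters}, $\sin\phi=c\omega/|D|$ and $\cos\phi=(k-m\omega^2)/|D|$. Hence
\[
2cA_\infty=2m\sin^2\phi,\qquad 2cB_\infty=-2m\sin\phi\cos\phi .
\]
For $C_\infty$ I use an energy argument. With unit initial velocity, all of the initial kinetic energy $m/2$ is eventually dissipated in the damper. Therefore $c\,C_\infty=m/2$, i.e.\ $C_\infty=m/(2c)$. If needed, these can be cross-checked against the $T\to\infty$ limits of \eqref{eq:ABC_closed}.

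\emph{Step 2 (simplification).} Substitute into \eqref{eq:PQ} with $r=1-\alpha$. Using $1-\alpha^2=r(1+\alpha)$ and $1+\alpha+r=2$, we get
\[
P_\infty=mr-2mr\sin^2\phi=mr\cos2\phi,\qquad Q_\infty=-mr\sin2\phi .
\]
Next apply the double-angle identities used in the proof of Theorem~\ref{thm:finite_gain}. This gives
\[
\Delta W_\infty=\tfrac{1}{2}mrX^2\omega^2\left[\cos2\phi+\cos(2\psi+2\phi)\right].
\]
Since $\psi+\phi=\omega t_0$, the bracket equals $\cos 2\phi+\cos 2\omega t_0$, which is $2\left(\cos^2\omega t_0-\sin^2\phi\right)$. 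This yields \eqref{eq:asymptotic_gain}.

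\emph{Step 3 (consequences).} For $0\le\alpha<1$ the prefactor $(1-\alpha)mX^2\omega^2$ is strictly positive, which gives the equivalence \eqref{eq:asymptotic_benefit_condition}. Over $t_0$, the bracket is maximized when $\cos^2\omega t_0=1$. This happens exactly when $\omega t_0=n\pi$, i.e.\ at the zeros of $\sin\omega t$, and gives the value $1-\sin^2\phi=\cos^2\phi$. This proves \eqref{eq:max_gain} and \eqref{eq:zero_crossing_time}. If $\cos\phi\neq0$, the resulting expression $(1-\alpha)mX^2\omega^2\cos^2\phi$ is strictly decreasing in $\alpha$ on $[0,1]$. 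Hence $\alpha^\star=0$, which gives \eqref{eq:alpha_opt} and \eqref{eq:full_brake_gain}.
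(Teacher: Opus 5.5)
Your proof is correct. Its skeleton matches the paper's: pass to the limit in Theorem~\ref{thm:finite_gain}, show $P_\infty=rm\cos2\phi$ and $Q_\infty=-rm\sin2\phi$, simplify trigonometrically, and read off the consequences.

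\emph{Where you differ.} The real difference is how you evaluate $A_\infty$, $B_\infty$, $C_\infty$.
\begin{itemize}
\item The paper substitutes the limits $\lambda/(\lambda^2+\nu^2)$ and $\nu/(\lambda^2+\nu^2)$ of \eqref{eq:aux_integrals} into the $\nu_\pm$ expressions \eqref{eq:ABC_closed}. It then states, ``after simplification,'' $A_\infty=2a\omega^2/D$, $B_\infty=-\omega(\omega_n^2-\omega^2)/D$ and $C_\infty=1/(4a)$, with the real quantity $D=(\omega_n^2-\omega^2)^2+4a^2\omega^2$.
\item You read $A_\infty-iB_\infty$ off the frequency response $H(i\omega)$ of the unit-initial-velocity response. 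You get $C_\infty$ from the energy balance $c\int_0^\infty h^2\,\mathrm{d}\tau=m/2$.
\end{itemize}
Your values agree with the paper's. Your complex $D$ satisfies $|D|^2=m^2$ times the paper's $D$, so rename it to avoid a clash.

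\emph{What your route buys.}
\begin{itemize}
\item It removes the nontrivial recombination of the $\nu_\pm$ terms that the paper leaves implicit.
\item It shows why $\phi$ appears at all. $H(i\omega)$ shares the denominator $k-m\omega^2+ic\omega$ with the steady-state amplitude, which gives $2cA_\infty=2m\sin^2\phi$ and $2cB_\infty=-2m\sin\phi\cos\phi$ at once.
\item It uses only that $h$ is the velocity response of an asymptotically stable homogeneous system. It would therefore carry over unchanged beyond the underdamped case.
\end{itemize}

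\emph{What the paper's route buys.} It exhibits the infinite-horizon result explicitly as the limit of the same closed forms used for the finite-horizon curves in Fig.~\ref{fig2}.

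Your final trigonometric step via $\cos2\phi+\cos2\omega t_0$ is equivalent to the paper's factorization $\cos(\omega t_0-\phi)\cos(\omega t_0+\phi)$.
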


\begin{proof}
From \eqref{eq:aux_integrals}, we obtain
\[
\lim_{T\rightarrow\infty}\mathcal{C}_{\lambda}(\nu,T)
=\frac{\lambda}{\lambda^2+\nu^2},
\qquad
\lim_{T\rightarrow\infty}\mathcal{S}_{\lambda}(\nu,T)
=\frac{\nu}{\lambda^2+\nu^2}
\]

\noindent Using these limits in \eqref{eq:ABC_closed} gives
\begin{align*}
A_\infty=\frac{2a\omega^2}{D}, \quad
&B_\infty=-\frac{\omega(\omega_n^2-\omega^2)}{D}, \quad
C_\infty=\frac{1}{4a}\cr
D &\triangleq (\omega_n^2-\omega^2)^2+4a^2\omega^2
\end{align*}

\noindent Taking $T\rightarrow\infty$ in \eqref{eq:PQ}, and using $c=2ma$ from \eqref{eq:dynamic_parameters} and $r=1-\alpha$ from \eqref{eq:r_psi}, we get after simplification
\begin{equation}
P_\infty=rm\left(1-\frac{8a^2\omega^2}{D}\right), \quad
Q_\infty=-\frac{4mar\omega(\omega_n^2-\omega^2)}{D}
\label{eq:pq_infty}
\end{equation}

\noindent From \eqref{eq:steady_phase}, using $c=2ma$ and $k=m\omega_n^2$, we have
\[
\cos\phi=(\omega_n^2-\omega^2)/\sqrt{D},
\quad
\sin\phi=(2a\omega)/(\sqrt{D})
\]

\noindent Thus, using $\cos(2\phi)=1-2\sin^2\phi$ and $\sin(2\phi)=2\sin\phi\cos\phi$, \eqref{eq:pq_infty} simplifies to
\[
P_\infty=rm\cos(2\phi), \quad
Q_\infty=-rm\sin(2\phi)
\]

\noindent Taking $T\rightarrow\infty$ in \eqref{eq:finite_gain} and substituting the above  gives
\[
\Delta W_\infty
=rmX^2\omega^2
\left[
\cos(2\phi)\cos^2\psi
-\sin(2\phi)\sin\psi\cos\psi
\right]
\]

\noindent Factoring $\cos\psi$ and using the angle-addition identity,
\[
\Delta W_\infty
=rmX^2\omega^2\cos\psi\cos(\psi+2\phi)
\]

\noindent Using $\psi=\omega t_0-\phi$ and $r=1-\alpha$, we have
\[
\Delta W_\infty
=(1-\alpha)mX^2\omega^2
\cos(\omega t_0-\phi)\cos(\omega t_0+\phi)
\]

\noindent and then utilizing the identity $\cos(A-B)\cos(A+B)=\cos^2A-\sin^2B$ gives \eqref{eq:asymptotic_gain}.

Next, for any fixed $0\!\leq\!\alpha\!<\!1$, the factor $(1-\alpha)mX^2\omega^2$ in \eqref{eq:asymptotic_gain} is strictly positive. Hence, $\Delta W_\infty\!>\!0$ if and only if \eqref{eq:asymptotic_benefit_condition} holds. Moreover, $\Delta W_\infty$ is maximized with respect to $t_0$ when $\cos^2(\omega t_0)=1$, which occurs when \eqref{eq:zero_crossing_time} holds. Substitution $t_0 = t_0^\star$ from \eqref{eq:zero_crossing_time} into \eqref{eq:asymptotic_gain} establishes \eqref{eq:max_gain}. Finally, if $\alpha$ is also optimized and $\cos\phi\neq0$, the maximum in \eqref{eq:max_gain} occurs at $\alpha^\star=0$, yielding \eqref{eq:full_brake_gain}.
\end{proof}

\begin{remark}
When $\omega=\omega_n$, the steady-state displacement lags the harmonic forcing by $\phi=\pi/2$ \cite{thomson2008theory}. Hence, \eqref{eq:asymptotic_gain} gives $\Delta W_\infty\leq0$ for every braking instant. Thus, regenerative impulsive braking cannot provide a positive infinite-horizon energy gain when the excitation frequency equals the undamped natural frequency.
\end{remark}

\section{Results}
\subsection{Insights from Analytical Results}\label{sec:analytical-results}
The results for the system in \eqref{eq:plant} with $m=1$ kg, $c=2$ N\,s/m, $k=100$ N/m, $F_0=1$ N, and $\omega=8$ rad/s are shown in Fig. \ref{fig2}(a)-(b). Fig. \ref{fig2}(a) shows the normalized harvested-energy gain $\Delta W(T)/(mX^2\omega^2)$ as a function of the braking phase $\omega t_0$ for $\alpha=0.5$ and several normalized post-braking horizons $T_n \triangleq T/(2\pi/\omega)$. Fig. \ref{fig2}(b) shows the corresponding maximum harvested-energy gain, obtained by optimizing the braking instant, as a function of the braking ratio $\alpha$ for the same set of horizons.

It can be observed from Fig. \ref{fig2}(a) that, for short horizons, both the harvested-energy gain and the optimal braking phase depend strongly on $T_n$, consistent with \eqref{eq:finite_gain}. As the horizon increases, the optimal braking phase approaches $\omega t_0=n\pi$, as predicted by the infinite-horizon result in \eqref{eq:zero_crossing_time}, and the finite-horizon curves approach the limiting gain in \eqref{eq:asymptotic_gain} as $T_n$ increases.  However, it is worth noting that this convergence is not monotonic. In particular, for an intermediate horizon $T_n=3$, the harvested-energy gain exceeds the infinite-horizon value and also attains a larger maximum. This behavior follows from the fact that $\Delta W(T)$ is the accumulated difference between the energy harvested under regenerative braking and that obtained through passive damping alone. Its rate of change with the post-braking horizon is given by differentiating \eqref{eq:deltaW_def} as follows:
\[
(d\Delta W(T))/(dT)=c\left[\dot{x}_{\rm imp}^2(t_0+T)-\dot{x}_p^2(t_0+T)\right]
\]

\noindent which can change sign as the braking-induced transient evolves. Thus, increasing the post-braking horizon can initially increase the harvested-energy gain and later reduce it as the transient decays before the gain converges to its infinite-horizon value, as observed in Fig. \ref{fig2}(a).

Fig. \ref{fig2}(b) shows the corresponding maximum normalized harvested-energy gain as a function of the braking ratio $\alpha$, with the braking instant $t_0$ optimized separately for each $T_n$ and $\alpha$ according to \eqref{eq:t0_opt}. As expected, the harvested-energy gain vanishes at $\alpha=1$, corresponding to no braking, while stronger braking produces a larger maximum gain. For short horizons, the dependence on $\alpha$ is nonlinear because the braking strength affects both the energy extracted directly during the braking event and the subsequent transient contribution. As $T_n$ increases, the curves approach the linear dependence on $(1\!-\!\alpha)$, predicted by \eqref{eq:max_gain}. The intermediate-horizon curves also confirm the non-monotonic behavior observed in Fig. \ref{fig2}(a), with the maximum harvested-energy gain exceeding its infinite-horizon value around $T_n=3$. The results indicate that the finite-horizon analysis captures transient harvesting effects that are not reflected by the infinite-horizon result alone and shows that the largest harvested-energy gain can occur before the braking-induced transient has completely decayed.

\begin{figure}[t!]
 \centering
\psfrag{A}[c][c][1][0]{{\fontsize{6pt}{6pt}$T_n = 0.1$}}

\psfrag{C}[c][c][1][0]{\fontsize{6pt}{6pt}$T_n = 1.0$}
\psfrag{D}[c][c][1][0]{\fontsize{6pt}{6pt}$T_n = 2.0$}
\psfrag{E}[c][c][1][0]{\fontsize{6pt}{6pt}$T_n = 3.0$}
\psfrag{G}[c][c][1][0]{\fontsize{6pt}{6pt}$T_n = 5.0$}
\psfrag{H}[c][c][1][0]{\fontsize{6pt}{6pt}$T_n \rightarrow \infty$}
\psfrag{O}[c][c][1][0]{\fontsize{6pt}{6pt}$\Delta W/(mX^2\omega^2)$ }
\psfrag{P}[c][c][1][0]{\fontsize{6pt}{6pt}$\Delta W_{\max}/(mX^2\omega^2)$}
\psfrag{I}[c][c][1][0]{\fontsize{6pt}{6pt}$\pi/2$}
\psfrag{J}[c][c][1][0]{\fontsize{6pt}{6pt}$\pi$}
\psfrag{K}[c][c][1][0]{\fontsize{6pt}{6pt}$3\pi/2$}
\psfrag{L}[c][c][1][0]{\fontsize{6pt}{6pt}$2\pi$}
\psfrag{M}[c][c][1][0]{\fontsize{6pt}{6pt}$\omega t_0$}
\psfrag{N}[c][c][1][0]{\fontsize{6pt}{6pt}$\alpha$}

 \includegraphics[width=\linewidth]{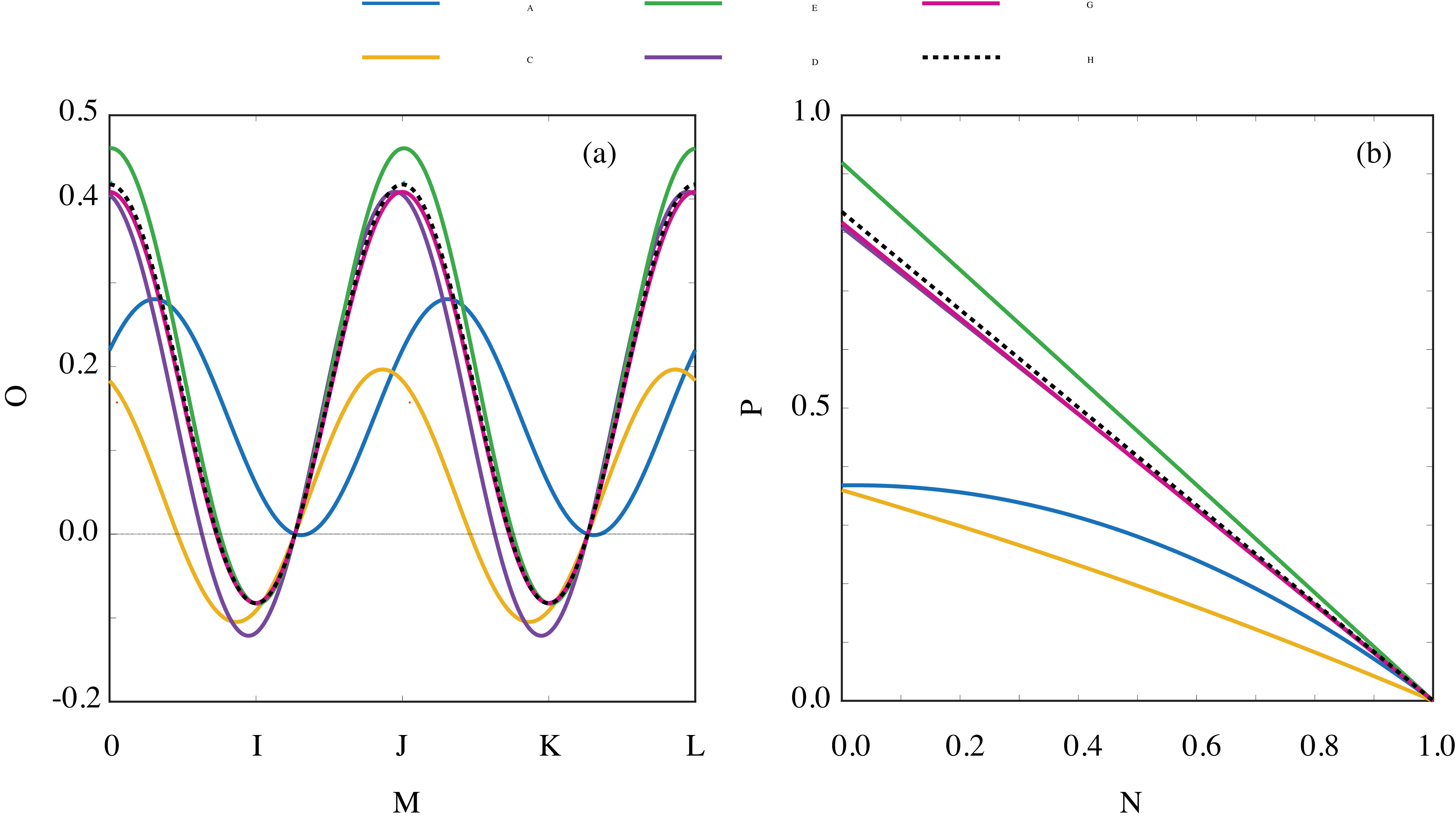}
\caption{Normalized harvested-energy gain for the harmonically forced oscillator. (a) Normalized harvested-energy gain as a function of the braking phase $\omega t_0$ for $\alpha=0.5$ and several normalized post-braking horizons $T_n$; (b) Normalized maximum harvested-energy gain, optimized over the braking instant, as a function of the braking ratio $\alpha$ for the same horizons. The dashed lines denote the corresponding infinite-horizon results.}
 \label{fig2}
 \end{figure}

\subsection{Simulation with Approximate Impulsive Braking}\label{sec5b}

To examine whether the ideal regenerative impulsive braking results can be approximated using a short-duration high-gain braking action, a dynamic simulation of \eqref{eq:plant} was performed using the same system parameters as in Section \ref{sec:analytical-results}. The regenerative brake was implemented using the high-gain feedback $u_b=-K_b\dot{x}$ with $K_b=200$ N\,s/m, following the use of high-gain feedback to approximate impulsive actions in \cite{kant2026tracking}. For comparison with the ideal braking case, the brake was activated at the optimal phase obtained from \eqref{eq:t0_opt} for $T_n=3$ and $\alpha=0$, and deactivated when $|\dot{x}|\leq0.01$ m/s. As shown in Fig. \ref{fig3}(a), the passive and braked responses are identical prior to braking since the oscillator is in periodic steady-state. Once activated, the brake reduces the velocity from $0.1869$ m/s to $0.0100$ m/s within a very short duration $\delta t\approx0.02$ s, yielding an effective braking ratio of $\alpha_{\rm eff}=0.0535$. Thus, the short-duration high-gain braking closely approximates the ideal complete braking condition ($\alpha=0$). After braking, $u_b=0$ and the system again evolves passively from the modified state, producing the transient response observed in Fig. \ref{fig3}(a), which gradually approaches the periodic steady-state.

\begin{figure}[t!]
 \centering
\psfrag{A}[c][c][1][0]{\fontsize{6pt}{7pt}\selectfont\begin{tabular}{c}regenerative\\braking\\phase\end{tabular}}
\psfrag{B}[c][c][1][0]{\fontsize{6pt}{6pt}\selectfont $\delta_t \!\approx\! 0.02$ sec}
\psfrag{C}[c][c][1][0]{\fontsize{6pt}{6pt}$\mathcal{W}_b$}
\psfrag{O}[c][c][1][0]{\fontsize{6pt}{6pt}\selectfont$\dot x$(m/s)}
\psfrag{P}[c][c][1][0]{\fontsize{6pt}{6pt}\selectfont $\mathcal{W}$(J)}
\psfrag{D}[c][c][1][0]{\fontsize{6pt}{6pt}$\mathcal{W}_{\rm imp}(t)$}
\psfrag{E}[c][c][1][0]{\fontsize{6pt}{6pt}$\mathcal{W}_{\rm pas}(t)$}
 \includegraphics[width=\linewidth]{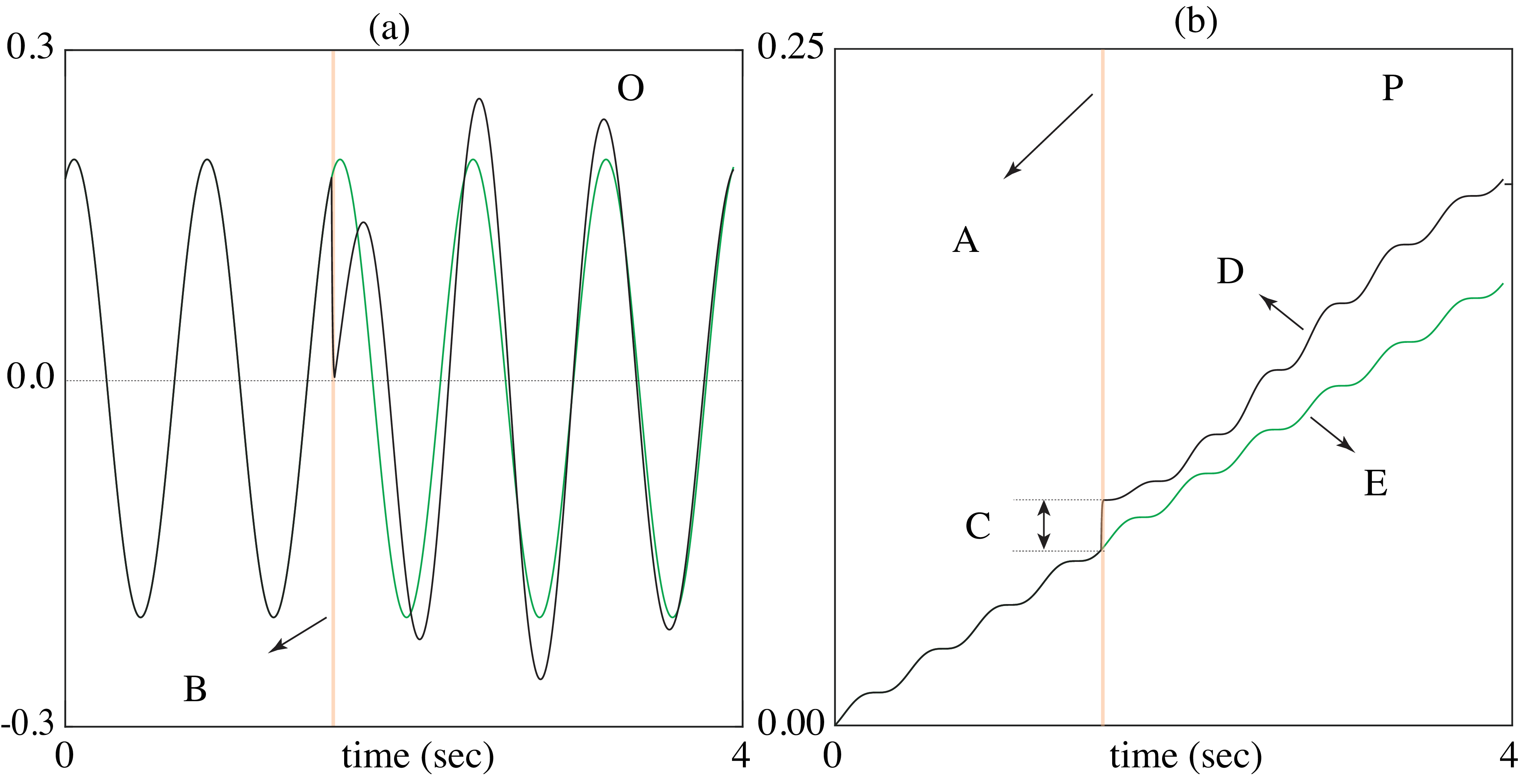}
\caption{Simulation of regenerative braking combined with passive harvesting for the system in \eqref{eq:plant}. (a) Velocity response for passive harvesting alone (green) and regenerative braking combined with passive harvesting (black). The high-gain braking phase has duration $\delta t\approx0.02$ s and rapidly reduces the velocity (shaded pink region). (b) Cumulative harvested-energies from the start of the simulation, $\mathcal{W}_{\mathrm{pas}}(t)$ (green) and $\mathcal{W}_{\mathrm{imp}}(t)$ (black). The quantity $\mathcal{W}_b$ denotes the energy extracted during high-gain braking.}
 \label{fig3}
\end{figure}

Let $\mathcal{W}_{\rm imp}(t)$ and $\mathcal{W}_{\rm pas}(t)$ denote the cumulative harvested-energies from the start of the simulation up to time $t$. This distinguishes them from $W_{\rm imp}(T)$ and $W_{\rm pas}(T)$ in \eqref{eq:Wimpulsive}-\eqref{eq:Wpassive}, which quantify the energy harvested only over the post-braking interval $[t_0,t_0+T]$. Similarly, $\mathcal{W}_b$ denotes the energy extracted by the approximate impulsive brake, whereas $W_b$ in \eqref{eq:brake_energy} denotes the instantaneous energy extraction of the ideal impulsive brake. Fig. \ref{fig3}(b) shows the cumulative harvested-energies $\mathcal{W}_{\rm imp}(t)$ and $\mathcal{W}_{\rm pas}(t)$. Prior to braking, the two curves coincide as the oscillator is in periodic steady-state. During the high-gain braking, the regenerative brake extracts an additional energy $\mathcal{W}_b$, producing the rapid increase in $\mathcal{W}_{\rm imp}(t)$. The benefit, however, extends beyond this direct extraction. Since braking modifies the system state, the induced transient alters the subsequent energy harvested through passive damping, causing $\mathcal{W}_{\rm imp}(t)$ to remain above $\mathcal{W}_{\rm pas}(t)$ after braking. Thus, the simulation reproduces the two mechanisms captured by the ideal impulsive model, which are immediate regenerative energy extraction and modification of the subsequent passive harvesting. The results indicate that the harvested-energy gain as per the theoretical results can be approximated using short-duration high-gain braking.

\section{Conclusion}

This paper investigated regenerative impulsive braking as an augmentation to passive energy harvesting in a harmonically forced linear oscillator. Starting from periodic steady-state operation, the braking-induced transient was derived analytically and used to obtain an exact expression for the harvested-energy gain over an arbitrary finite post-braking horizon. The analysis provided conditions under which braking is beneficial and determined the optimal braking instant. The corresponding infinite-horizon result showed that optimal braking occurs at zero crossings of the external harmonic force, with complete braking maximizing the harvested-energy gain whenever a positive gain is achievable. Numerical plots for the finite-horizon analysis further revealed that the harvested-energy gain need not approach its infinite-horizon value monotonically and can exceed the asymptotic harvested-energy gain at intermediate horizons. Dynamic simulations using short-duration high-gain regenerative braking closely reproduced the mechanism predicted by the ideal impulsive model, indicating that the proposed approach can be applied in practice using approximate braking actions.

The results highlight that regenerative braking can provide benefits beyond the energy extracted directly during the braking event. By modifying the system state, the braking action also reshapes the subsequent transient response and hence the energy harvested through passive damping. This suggests a broader opportunity for exploiting intermittent regenerative actions as an additional mechanism for energy extraction in mechanical systems. Future work will investigate repeated braking events and the resulting hybrid steady-state dynamics, with emphasis on maximizing average harvested power through optimal event timing and braking strength. The framework will also be extended to general periodic forcing, active harvesting and uncertain nonlinear multi-degree-of-freedom oscillators.

\bibliographystyle{IEEEtran}
\bibliography{ref}

\end{document}